\documentclass{cccg25}
\usepackage{graphicx,amssymb,amsmath}




\usepackage{import}
\usepackage{xcolor}
\usepackage{cleveref}

\newtheorem{definition}[theorem]{Definition}

\usepackage{lineno}

\crefname{conj}{conjecture}{conjectures}
\crefname{obs}{observation}{observations}
\crefname{prop}{proposition}{propositions}
\crefname{lemma}{lemma}{lemmas}
\crefname{definition}{definition}{definitions}


\DeclareMathOperator\sm{sm}
\AddToHook{env/lemma/begin}{\crefalias{theorem}{lemma}}
\AddToHook{env/prop/begin}{\crefalias{theorem}{prop}}
\AddToHook{env/definition/begin}{\crefalias{theorem}{definition}}


\title{Straight-line Orthogonal Drawing of Complete Ternary Tree Requires $O(n^{1.032})$ Area}

\author{Hong Duc Bui\thanks{\texttt{buihd@u.nus.edu}}}


\begin{document}
\thispagestyle{empty}
\maketitle

\begin{abstract}
	We resolve a conjecture posed by Covella, Frati and Patrignani in \cite{Covella_2021}
	by proving the straight-line orthogonal drawing of the complete ternary tree with $n$ nodes
	satisfying the subtree separation property
	with smallest area has area $\Omega (n^{1.031})$.
	We also improve the upper bound of this area to $O(n^{1.032})$.
\end{abstract}

\section{Introduction}

We consider the problem of embedding a tree into a grid.
Given a tree $T$ and a grid $G$,
an embedding maps each vertex $v$ of $T$ to a distinct vertex $v'$ of $G$,
and each edge $u v$ of $T$ to a polyline $u' v'$ of $G$,
such that no two polylines intersect except at endpoints.

We call such an embedding \emph{orthogonal}, because all grid lines are either horizontal or vertical.
If, furthermore, the embedding satisfies that every edge $u v$ gets mapped to either all horizontal segments
or all vertical segments, then we call such an embedding \emph{straight-line}.
In the literature, straight-line orthogonal embeddings in a grid are also called straight-line orthogonal \emph{drawings}.

Determining whether a straight-line orthogonal drawing of a tree exists is simple ---
each node of the grid has degree at most $4$, so a necessary condition is that
each node of the tree has degree at most $4$, and it can be shown that this condition is also sufficient.
As such, most research on drawings of trees is concerned with minimizing
the \emph{area} of the grid, where the area is defined as the number of nodes of the grid.

There has been much research on this problem, see \cref{table_summary_existing_works} for a summary.
If the drawing is not required to be straight-line (only orthogonal),
\cite{Valiant_1981} proves that there exists an embedding with area $O(n)$ for all embeddable trees ---
this is seen to be asymptotically optimal, because the area must be at least $n$.
When the drawing is required to be straight-line, the complete binary tree can be drawn in
$O(n)$ area by \cite{Crescenzi_1992}, and it is proven in \cite{Chan_2019} that any binary tree
can be drawn in $n \cdot 2^{O(\log^* n)}$ area, which is almost linear.

\begin{table}
	\centering
\begin{tabular}{ccccc}
	                & \!\!\!Straight-line& Upper bound           & Ref. \\ \hline
	Comp.\ Binary & $\checkmark $          & $O(n)$                & \cite{Crescenzi_1992} \\
	Binary          & $\checkmark $          & $n \cdot 2^{O(\log^* n)}$ & \cite{Chan_2019} \\
	Comp.\ Ternary& $\checkmark $          & $O(n^{1.118})$        & \cite{Ali2015} \\
	Ternary         & $\checkmark $          & $O(n^{1.576})$        & \cite{Covella_2021} \\
	Any             &              & $O(n)$                & \cite{Valiant_1981}
\end{tabular}
\caption{Summary of existing works on tree drawings.}
\label{table_summary_existing_works}
\end{table}

For the ternary case however, the known bounds are less tight.
Prior to our work, the
best known upper bound for the complete ternary tree is $O(n^{1.118})$ proven in \cite{Ali2015},
improving upon an existing bound $O(n^{1.262})$ proven in \cite{Frati}.
The best known upper bound for an arbitrary ternary tree is $O(n^{1.576})$ proven in \cite{Covella_2021}.

In this article, we improve the upper bound of the area needed from $O(n^{1.118})$ to $O(n^{1.032})$.
Our method is based on the analysis in \cite{Covella_2021} of the drawings
satisfying the subtree separation property.
Drawings with this property are more easily analyzed.

We also improve the lower bound of the area needed in the special case of drawings
satisfying the subtree separation property to $\Omega (n^{1.031})$.
This is the first non-trivial lower bound on the area needed,
with the trivial lower bound being $\Omega (n)$.

This article is organized as follows.
In \cref{sec_definitions}, we formally define the notations being used in this article.
In \cref{sec_motivation}, we show the result of a numerical experiment
that motivates the proof.
In \cref{sec_upper_bound}, we explain the general proof strategy,
and prove a weaker upper bound $O(n^{1.051})$ for demonstration.
In \cref{sec_lower_bound}, we use a very similar proof strategy
to prove the lower bound $\Omega (n^{1.031})$.
Finally, in \cref{sec_numerical_upper}, we describe our numerical algorithm
to provide a certificate of the upper bound $O(n^{1.032})$.

\section{Definitions}
\label{sec_definitions}

We define the notation for the complete ternary tree following \cite{Covella_2021}.

\begin{definition}
	For each positive integer $l$, let $T_l$ be the rooted complete ternary tree with $l$ layers ---
	that is, each non-leaf node has exactly $3$ children, and each root-to-leaf path
	has exactly $l$ nodes.
\end{definition}
With this definition, $T_1$ has $1$ node, $T_2$ has $4$ nodes, etc.

We have defined straight-line orthogonal drawings of a tree in the introduction.
Now we will formally define the subtree separation property.

\begin{definition}
	A drawing is said to satisfy the subtree separation property if, for every nodes $a$ and $b$ of the tree
	such that the two subtrees rooted at $a$ and $b$ have no nodes in common,
	the smallest axis-aligned bounding rectangles in the drawing
	containing all the nodes of these two subtrees
	have no grid nodes in common.
\end{definition}

See \cref{fig_subtree_not_separated} for an illustration of a drawing that does not satisfy the subtree separation property.

\begin{figure}
    \centering
    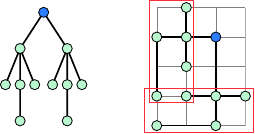
    \caption{Example of a straight-line orthogonal drawing that does not satisfy the subtree separation property.
	The tree (left panel, root marked in blue) is embedded in a $4 \times 5$ grid (right panel), and the two subtrees rooted at the two children of the root have intersecting bounding rectangles.}
    \label{fig_subtree_not_separated}
\end{figure}

We define the following notation for convenience.

\begin{definition}
	Given positive integers $l$ and $h$, odd positive integer $w$, write $T_l \leq (w, h)$
	if there is an orthogonal straight-line drawing of $T_l$ in a grid with width $w$ and height $h$
	such that:
	first, the subtree separation property is satisfied;
	second, the root of the tree is on the middle vertical grid line;
	and third, the vertical ray from the root to the top of the grid does not
	intersect any tree nodes or edges.
\end{definition}

See \cref{fig_def_embed_operator} for an illustration that $T_3 \leq (5, 6)$.
The red segment in the figure marks the vertical ray from the root to the top of the grid.
In order for the drawing to satisfy the third condition of the definition above,
no nodes or edges can intersect this red segment.

\begin{figure}
    \centering
    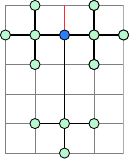
    \caption{Illustration for $T_3 \leq (5, 6)$.}
    \label{fig_def_embed_operator}
\end{figure}

We define a special class of constructions as follows, which has the advantage of being very easy to analyze.
This is a slightly modified form of a 1-2 drawing in \cite[Section 3]{Covella_2021}.

\begin{definition}
	\label{def_symmetric_construction_12}
	We call a straight-line orthogonal drawing of $T_l$ a \emph{symmetric 1-2 drawing} if the following conditions are satisfied.
	For $l=1$, the only symmetric 1-2 drawing is the unique drawing on the $1 \times 1$ grid.
	For $l>1$, let $\Gamma _l$ and $\Gamma _b$ be two symmetric 1-2 drawings of $T_{l-1}$,
	then:\footnote{The $l$ in $\Gamma _l$ is a literal character, not an index.}
	\begin{itemize}
		\item define a drawing $\Gamma _1$ created by \emph{construction 1} as follows:
			put a copy of $\Gamma _b$ below the root at distance $1$,
			put two copies of $\Gamma _l$ rotated $90^{\circ }$ to the left and right of the root
			at the minimum distance such that the subtree separation property is satisfied.
		\item define a drawing $\Gamma _2$ created by \emph{construction 2} as follows:
			put two copies of $\Gamma _l$ rotated $90^{\circ }$ to the left and right of the root
			at distance $1$ from the root,
			then put a copy of $\Gamma _b$ below the root
			at the minimum distance such that the subtree separation property is satisfied.
	\end{itemize}
\end{definition}

It is easier to understand this looking at a picture than looking at the description,
see \cref{fig_definition_construction_12}.

\begin{figure}
    \centering
    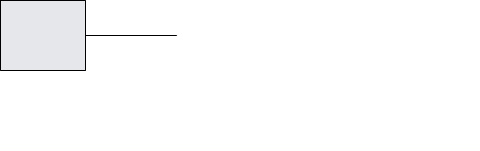
	\caption{Illustration of \cref{def_symmetric_construction_12}, with the left panel illustrating construction 1 and the right panel illustrating construction 2.}
    \label{fig_definition_construction_12}
\end{figure}

From the definition, we get the following lemma, which also explains the name.
\begin{lemma}
	\label{claim_recursion_formula}
	All symmetric 1-2 drawings have odd width, and are vertically symmetric.
	Furthermore, let the size of $\Gamma _l$ be $(w_l, h_l)$ and the size of $\Gamma _b$ be $(w_b, h_b)$,
	then the size of $\Gamma _1$ is $(2 h_l+w_b, \frac{w_l}{2}+\max(\frac{w_l}{2}, h_b+\frac{1}{2}))$,
	and the size of $\Gamma _2$ is $(\max(2 h_l+1, w_b), w_l + h_b)$.
\end{lemma}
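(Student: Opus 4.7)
My plan is to induct on $l$. The base case $l = 1$ is immediate: $T_1$ sits on the $1\times 1$ grid, which has odd width and is vacuously vertically symmetric. For $l > 1$, I would assume inductively that the two subdrawings $\Gamma_l$ and $\Gamma_b$ of $T_{l-1}$ are vertically symmetric with odd widths $w_l$ and $w_b$. Vertical symmetry of $\Gamma_1$ and $\Gamma_2$ then follows because the two rotated copies of $\Gamma_l$ are rotated by $90^\circ$ in opposite senses, making them mirror images placed symmetrically about the central column, on which the (also symmetric) $\Gamma_b$ sits. Odd width follows from $2h_l + w_b$ being odd (as $w_b$ is odd) and $\max(2h_l + 1, w_b)$ being a maximum of two odd numbers.

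For the dimensions, I would place the main root at column $c_0$ and row $r_0$, and compute the bounding box of each of the three components separately. The crucial geometric observation, inherited from the definition of $T_{l-1} \leq (w_l, h_l)$, is that the root of $\Gamma_l$ sits on its middle vertical grid line with a clear vertical ray upward; after rotating by $90^\circ$ the root lies on the middle horizontal line of the rotated bounding box with a clear horizontal ray pointing toward $c_0$. The rotated side copy therefore occupies rows $[r_0 - (w_l-1)/2,\, r_0 + (w_l-1)/2]$, and the horizontal connecting edge from $c_0$ is unobstructed.

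For construction $1$, the subtree-separation constraint between a side copy (width $h_l$) and $\Gamma_b$ (width $w_b$, centered beneath the root) forces a one-column gap whenever the row ranges overlap, i.e.\ whenever $w_l \geq 3$; in the tight case this yields the claimed width $2h_l + w_b$, and the edge case $w_l = 1$ is easily checked to give the same expression. The height is the sum of the $(w_l-1)/2$ rows above $r_0$ and the rows from $r_0$ down to the lower of the side copy and $\Gamma_b$, which rearranges algebraically to $\frac{w_l}{2} + \max(\frac{w_l}{2},\, h_b + \frac{1}{2})$. Construction $2$ is handled analogously: the side copies placed at distance $1$ give horizontal extent $2h_l + 1$, $\Gamma_b$ gives extent $w_b$, so the width is the maximum of these two; $\Gamma_b$ is then pushed down until its top row clears the side copies' row range, giving total height $w_l + h_b$.

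The main work is the bookkeeping of coordinates through the $90^\circ$ rotation and verifying that the minimum-distance placement is tight; for the latter one must check that shrinking any gap would force two disjoint-subtree bounding rectangles to share a grid node. The clear-ray invariant must also be carried along through the induction, so that the horizontal connecting edges (and the upward rays retained by the main roots of $\Gamma_1$ and $\Gamma_2$) remain unobstructed at every level.
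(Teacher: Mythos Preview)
The paper does not actually give a proof of this lemma: it simply writes ``From the definition, we get the following lemma,'' treating the size formulas as read off directly from \cref{def_symmetric_construction_12} and the accompanying figure. Your inductive argument, carrying along the invariants ``odd width,'' ``vertical symmetry,'' and ``clear upward ray from the root on the middle column,'' and then doing the coordinate bookkeeping for the three blocks, is precisely the natural way to substantiate that one-line claim, and it is correct. In particular, your observation that after the $90^\circ$ rotation the root of a side copy lies on the middle horizontal line with an unobstructed horizontal ray toward $c_0$ is the key geometric fact that makes the horizontal connecting edges valid and fixes the row range $[r_0-(w_l-1)/2,\,r_0+(w_l-1)/2]$; from there the width and height formulas for both constructions follow by the column/row separation arguments you outline, including the edge case $l=2$ (equivalently $w_l=w_b=1$).

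One small point worth making explicit when you write this out in full: the phrase ``at distance~$1$'' in \cref{def_symmetric_construction_12} should be read as the bounding box of the placed subdrawing being one grid unit away from the root (not root-to-root distance). Under that reading the horizontal extent $2h_l+1$ in construction~2 and the vertical extent $h_b$ below the root in construction~1 are independent of where the subdrawing's own root sits inside its bounding box, and the clear-ray invariant guarantees the connecting edge is unobstructed regardless. You already flag the clear-ray invariant as something to carry through the induction; just make sure the same reasoning shows the new root of $\Gamma_1$ and $\Gamma_2$ again has an unobstructed upward ray in column $c_0$ (neither side copy reaches column $c_0$, and $\Gamma_b$ lies entirely below), so the invariant closes.
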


As we have mentioned, the symmetric 1-2 drawings are very easy to analyze.
In particular, we can compute \emph{all} grid sizes $(w, h)$ such that $T_l \leq (w, h)$.
The algorithm to compute these grid sizes was given in \cite[Lemma 5]{Covella_2021}:
\begin{lemma}
	For a fixed $l$, the Pareto-optimal pairs $(w, h)$ at level $l$
	can be computed in time polynomial in the number of nodes of $T_l$.
\end{lemma}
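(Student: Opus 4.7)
The plan is a bottom-up dynamic program over the levels. For each $k$ from $1$ to $l$, I would maintain the set $P_k$ of Pareto-optimal pairs $(w,h)$ that are achievable as the dimensions of some symmetric 1-2 drawing of $T_k$; the goal is to output $P_l$. The base case is $P_1 = \{(1,1)\}$. For the inductive step, \cref{def_symmetric_construction_12} tells us that every symmetric 1-2 drawing of $T_k$ is determined by the choice of two symmetric 1-2 subdrawings $\Gamma_l$ and $\Gamma_b$ of $T_{k-1}$ together with a choice between construction 1 and construction 2, and \cref{claim_recursion_formula} gives the resulting size in closed form.

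The observation that makes this DP correct is that both size formulas in \cref{claim_recursion_formula} are monotone non-decreasing in each of the four arguments $w_l, h_l, w_b, h_b$: this is immediate because $\max$ and $+$ are coordinatewise monotone in non-negative arguments. Therefore, whenever a pair $(w,h)$ at level $k$ is realized using some $\Gamma_l, \Gamma_b$, we can replace each of $\Gamma_l, \Gamma_b$ with any componentwise-dominating Pareto-optimal drawing from $P_{k-1}$ without increasing the resulting dimensions. Consequently, $P_k$ can be obtained by enumerating all pairs $((w_l,h_l),(w_b,h_b)) \in P_{k-1} \times P_{k-1}$, evaluating both size formulas, collecting the $2|P_{k-1}|^2$ resulting candidates, and then extracting the Pareto frontier of this multiset (for instance by sorting by width and sweeping).

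For the complexity analysis, note that inside any Pareto frontier the widths are pairwise distinct (otherwise the pair with larger height is dominated), and widths are odd positive integers bounded by the number of tree nodes $|T_k| \leq n$. Hence $|P_k| = O(n)$, the candidate set at each level has size $O(n^2)$, and extracting its Pareto frontier costs $O(n^2 \log n)$. Summing over the $l = O(\log n)$ levels yields total running time polynomial in $n$. The only point deserving careful verification is the monotonicity claim above, together with the assertion that restricting each subdrawing to $P_{k-1}$ loses no Pareto-optimal parent; both follow from direct inspection of the two size formulas in \cref{claim_recursion_formula}, so no real obstacle arises.
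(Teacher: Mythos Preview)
Your dynamic-programming approach is correct and is essentially the intended one: the paper does not supply its own proof of this lemma but simply cites \cite[Lemma 5]{Covella_2021}, where exactly this level-by-level computation of Pareto frontiers via the two recursion formulas is carried out. The monotonicity observation and the distinct-odd-widths bound on $|P_k|$ are the right ingredients. One small point worth making explicit in a write-up is the inductive check that the widths and heights appearing in any $P_k$ are indeed bounded by $|T_k|$; this follows from the recursion formulas (e.g.\ $2h_{k-1}+w_{k-1}\le 3|T_{k-1}|<|T_k|$), but you assert it without justification.
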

We should explain what Pareto-optimal pairs mean in this lemma.
Because if $w \leq w'$ and $h \leq h'$ then $T_l \leq (w, h) \Longrightarrow T_l \leq (w', h')$,
it suffices to consider for each $l$ the pairs $(w, h)$ such that $T_l \leq (w, h)$
and there exist no pair $(w', h')$ such that $w' \leq w$, $h' \leq h$, $w' \cdot h' < w \cdot h$, and $T_l \leq (w', h')$.
We call these pairs \emph{Pareto-optimal} at level $l$.

Apart from being easy to analyze, the symmetric 1-2 drawings additionally satisfy the following properties,
which is proven in \cite[Lemma 3]{Covella_2021}.
\begin{lemma}
	Given any straight-line orthogonal drawing $\Gamma $ of the complete ternary tree $T_l$,
	there exists a symmetric 1-2 drawing $\Gamma '$ whose both width and height are no more than
	those of $\Gamma $.
\end{lemma}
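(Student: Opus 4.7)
I plan to prove this by strong induction on $l$. The base case $l=1$ is immediate, since the unique drawing of $T_1$ on a $1 \times 1$ grid is itself a symmetric 1-2 drawing. For the inductive step, let $\Gamma$ be a straight-line orthogonal drawing of $T_l$ of dimensions $(W, H)$. The root $r$ has degree three, so exactly one of its four cardinal directions is unused; by rotating and possibly reflecting I may assume this free direction is ``up''. Call the three children of $r$---lying to its left, right, and below---$C_L, C_R, C_D$, and let $T_L, T_R, T_D$ be the subtrees they root, of dimensions $(w_L, h_L), (w_R, h_R), (w_D, h_D)$. Viewed as drawings of $T_{l-1}$, $T_D$ is up-free at its root, while $T_L$ is right-free and $T_R$ is left-free.

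The next step is to apply the inductive hypothesis to each child subtree, rotating $T_L$ and $T_R$ by $90^\circ$ first so that they become up-free drawings of $T_{l-1}$. This yields symmetric 1-2 drawings $\Gamma_b^\star, \Gamma^{(L)}, \Gamma^{(R)}$ of $T_{l-1}$, with dimensions bounded above by $(w_D, h_D)$, $(h_L, w_L)$, and $(h_R, w_R)$ respectively. My goal is then to combine them---using either construction 1 or construction 2 from \cref{def_symmetric_construction_12}, with $\Gamma_b$ filled by $\Gamma_b^\star$ and with $\Gamma_l$ chosen appropriately from the above---into a symmetric 1-2 drawing of $T_l$ of dimensions at most $(W, H)$, whose exact dimensions are computed by \cref{claim_recursion_formula}.

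The bulk of the argument is a case analysis on the geometry of $\Gamma$. Roughly, when the edge from $r$ to $C_D$ has length one---so $T_D$ sits just below the root while $T_L, T_R$ straddle it horizontally---construction 1 is the natural target, while when the edges from $r$ to $C_L$ and $C_R$ have length one, so that $T_D$ is pushed downward by subtree separation, construction 2 is the target. In each case I would derive lower bounds on $W$ and $H$ from the positions of $C_L, C_R, C_D$ and the subtree-separation property (for example, $W \ge w_L + w_R + 1$ whenever $T_L$ and $T_R$ are horizontally separated by the root column), and then verify that the chosen construction's dimensions respect these bounds.

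The main obstacle I anticipate is the asymmetry between $T_L$ and $T_R$: the symmetric 1-2 construction places identical copies of a single $\Gamma_l$ on both sides, whereas in $\Gamma$ the left and right subtrees may differ in both width and height, so neither $\Gamma^{(L)}$ nor $\Gamma^{(R)}$ automatically fits both slots after rotation. I expect the resolution to require selecting between $\Gamma^{(L)}$, $\Gamma^{(R)}$, or even a third symmetric 1-2 drawing (obtained by applying the inductive hypothesis to an auxiliary drawing chosen according to which of $w_L \le w_R$ and $h_L \le h_R$ hold), and exploiting the slack in $W$ and $H$ afforded by the subtree-separation inequalities to absorb the substitution. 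Verifying simultaneously that both the width and height bounds hold across every sub-case is where the technical work lies.
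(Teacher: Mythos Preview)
The paper does not supply its own proof of this lemma; it simply quotes it as \cite[Lemma~3]{Covella_2021}, so there is no in-paper argument to compare your sketch against.

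Regarding the sketch itself, two genuine gaps remain. First, the lemma is stated for an \emph{arbitrary} straight-line orthogonal drawing $\Gamma$, with no assumption of subtree separation, yet the lower bounds you propose for $W$ and $H$ explicitly invoke that property (e.g.\ $W \ge w_L + w_R + 1$). Without subtree separation the bounding boxes of $T_L$, $T_R$, $T_D$ may overlap, and such inequalities can fail outright. You would first need either to reduce to the subtree-separated case---itself a nontrivial statement---or to extract the required bounds from planarity alone; neither step is addressed. Second, the asymmetry obstacle you correctly flag is not actually resolved: choosing one of $\Gamma^{(L)}$, $\Gamma^{(R)}$ as the common $\Gamma_l$ can inflate one of the two dimensions while shrinking the other, and your ``third symmetric 1-2 drawing obtained from an auxiliary drawing'' is never specified. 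That case analysis is exactly the technical core of the lemma, and the proposal stops short of carrying it out.
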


\section{Motivation: The Pattern of the Pareto-optimal Grid Sizes}
\label{sec_motivation}

We compute the Pareto-optimal grid sizes for small values of $l$:
\begin{itemize}
	\item When $l = 1$, the only pair is $(1, 1)$.
	\item When $l = 2$, the only pair is $(3, 2)$.
	\item When $l = 3$, there is a pair $(5, 5)$ corresponding to construction 2, and a pair
		$(7, 4)$ corresponding to construction 1.
\end{itemize}
We make a scatterplot for all the pairs for each value of $l$.
The result is shown in \cref{fig_pareto_optimal_scatterplot},
where both $x$-axis and $y$-axis use a logarithmic scale.

\begin{figure}
    \centering
	\includegraphics{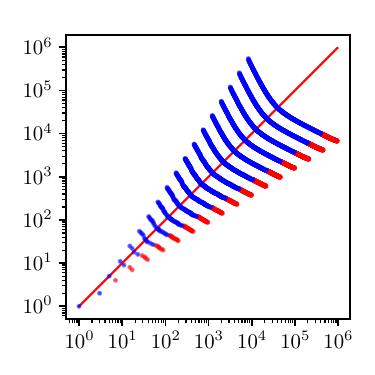}
    \caption{A scatterplot of all Pareto-optimal grid sizes. Red points denote construction 1, blue points denote construction 2.}
    \label{fig_pareto_optimal_scatterplot}
\end{figure}

From the figure, the pattern is obvious. Our goal is thus to prove that the pattern continues indefinitely.

In order to do so, we need to look at how these grid sizes were computed ---
the set of Pareto-optimal grid sizes at level $l$ is computed only from the Pareto-optimal grid sizes
at level $l-1$, independent of what happens at earlier levels.
As such, our proof will be inductive --- assume the Pareto-optimal grid sizes
at level $l-1$ satisfy some bound, we prove the Pareto-optimal grid sizes
at level $l$ satisfy another bound.

In order to formalize these concepts, we make the following definitions.
\begin{definition}[$\leq $-relation for grid sizes]
	Let $w$, $h$, $w'$, $h'$ be real numbers.
	We say $(w, h) \leq (w', h')$ if $w \leq w'$ and $h \leq h'$.
	Similarly, $(w, h) \geq (w', h')$ if $w \geq w'$ and $h \geq h'$.
\end{definition}
\begin{definition}
	For any set $A \subseteq \mathbb {R}^2$, define the \emph{upper-closure} $C(A) \subseteq \mathbb {R}^2$
	to be $C(A) = \{(w, h) \in \mathbb {R}^2 \mid \exists (w', h') \in A, (w', h') \leq (w, h) \}$.
	We say a set $A$ is \emph{upper-closed} if $A \subseteq \mathbb {R}^2$ and $C(A) = A$.
\end{definition}
\begin{definition}
	For each $l \geq 1$, define the set $E_l$ to be all pairs $(w, h) \in \mathbb {R}^2$
	such that $T_l \leq (w, h)$.
	Define $S_l = C(E_l)$.
\end{definition}
So for example, at $l = 2$, $E_l$ consists of all pairs of integers $(w, h)$
such that $w$ is odd, $w \geq 3$ and $h \geq 2$, while $S_l$ consists of all pairs of reals $(w, h)$
such that $w \geq 3$ and $h \geq 2$.
We see that $S_l$ is the natural extension of $E_l$ to the domain of all reals.

\begin{definition}
	For any upper-closed set $A \subseteq \mathbb {R}^2$ and real number $\delta $, define the \emph{shift} of $A$ by $\delta $ to be
	$\Delta (A, \delta ) = \{ (w \cdot \exp \delta , h \cdot \exp \delta ) \mid (w, h) \in A \}$.
\end{definition}

\begin{definition}
	For any upper-closed set $A \subseteq \mathbb {R}^2$, define the \emph{advance} of $A$ to be
	\begin{multline*}
		N(A) = C\Bigl(\bigl\{ 
				{\textstyle(2 h_l+w_b, \frac{w_l}{2}+\max(\frac{w_l}{2}, h_b+\frac{1}{2}))}
				\\ \quad \bigm| (w_l, h_l) \in A, (w_b, h_b) \in A
		\bigr\} \\
		\cup \bigl\{
			(\max(2 h_l+1, w_b), w_l + h_b)
			\\ \bigm| (w_l, h_l) \in A, (w_b, h_b) \in A
		\bigr\}\Bigr).
	\end{multline*}
\end{definition}
The formulas can be seen to be directly taken from \cref{claim_recursion_formula}.
As such, we get the following:
\begin{lemma}
	\label{lemma_N_advance_S}
	For each $l \geq 1$, $N(S_l) = S_{l+1}$.
\end{lemma}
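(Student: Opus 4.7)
I would prove the two inclusions $N(S_l) \subseteq S_{l+1}$ and $S_{l+1} \subseteq N(S_l)$ separately, in each case converting between the formula-level description given by $N$ and actual symmetric 1-2 drawings via \cref{claim_recursion_formula}.

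For the forward inclusion, start with $(w,h) \in N(S_l)$. Unpacking the definition of $N$, there exist $(w_l, h_l), (w_b, h_b) \in S_l$ such that one of the two expressions from \cref{claim_recursion_formula} evaluated at these inputs is dominated by $(w,h)$. Each coordinate of each expression --- $2h_l + w_b$, $\tfrac{w_l}{2} + \max(\tfrac{w_l}{2}, h_b + \tfrac{1}{2})$, $\max(2h_l+1, w_b)$, and $w_l + h_b$ --- is monotone non-decreasing in every variable, so using $S_l = C(E_l)$ I may replace $(w_l, h_l)$ and $(w_b, h_b)$ by dominating integer pairs in $E_l$ without increasing the output. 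These dominating pairs correspond to actual drawings of $T_l$ satisfying the three conditions of the $T_l \leq (\cdot,\cdot)$ relation, and feeding them into the appropriate construction of \cref{def_symmetric_construction_12} yields a drawing of $T_{l+1}$ whose size is exactly the expression output. That drawing still satisfies the three conditions, so the output lies in $E_{l+1}$; since $(w,h)$ dominates it, $(w,h) \in C(E_{l+1}) = S_{l+1}$.

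For the reverse inclusion, start with $(w,h) \in S_{l+1}$, pick $(w',h') \in E_{l+1}$ dominated by $(w,h)$, and apply the cited \cite[Lemma 3]{Covella_2021} to the corresponding drawing of $T_{l+1}$ to replace it by a symmetric 1-2 drawing $\Gamma^*$ of some size $(w'',h'') \leq (w',h')$. By \cref{def_symmetric_construction_12}, $\Gamma^*$ is built by construction 1 or construction 2 from two symmetric 1-2 drawings of $T_l$ of sizes $(w_l,h_l)$ and $(w_b,h_b)$; these sub-drawings themselves lie in $E_l$, hence in $S_l$, because subtree separation is enforced by the minimum-distance clauses, odd width and vertical symmetry come from \cref{claim_recursion_formula}, and the upward-ray condition is preserved inductively by the construction. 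By \cref{claim_recursion_formula}, $(w'',h'')$ equals one of the expressions in the definition of $N$ applied to these sub-sizes, so $(w'',h'') \in N(S_l)$; since $N(S_l)$ is upper-closed and $(w,h) \geq (w'',h'')$, this gives $(w,h) \in N(S_l)$.

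The main point requiring care is the inductive verification that every symmetric 1-2 drawing of $T_l$ satisfies the upward-ray condition, which is not explicitly restated in \cref{claim_recursion_formula}. This is resolved by noting that in both constructions the flanking copies of $\Gamma_l$ are rotated so that their own upward rays point horizontally away from the root rather than vertically, while the copy of $\Gamma_b$ is placed strictly below the root, leaving the root's column above it empty. Beyond this verification, the lemma is essentially a direct restatement of \cref{claim_recursion_formula} in the real-valued upper-closure formalism.
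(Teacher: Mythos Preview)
Your proposal is correct and is a careful unpacking of exactly what the paper treats as immediate from \cref{claim_recursion_formula}; the paper offers no further argument beyond the remark that the formulas in the definition of $N$ ``can be seen to be directly taken from'' that lemma. The one slip is the phrase ``dominating integer pairs in $E_l$'' in the forward inclusion---you mean pairs in $E_l$ that are \emph{dominated by} the given $(w_l,h_l),(w_b,h_b)\in S_l$---but the surrounding clause ``without increasing the output'' makes your intended direction clear.
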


\begin{definition}
	For two upper-closed sets $A$ and $B$, we say $A \leq B$ if $A \supseteq B$, and $A \geq B$ if $A \subseteq B$.
\end{definition}
Unfortunately this $\leq $ notation looks reversed (just like in algebraic geometry), but readers should be able to see the motivation
behind this definition ---
for $(w, h)$ and $(w', h') \in \mathbb {R}^2$,
$(w, h) \leq (w', h')$ if and only if $C(\{(w, h)\}) \leq C(\{(w', h')\})$.

\section{Upper Bound: Preliminary}
\label{sec_upper_bound}

Using these definitions, we explain how the proof of the upper bound proceeds.
As previously explained, it will use induction.
\begin{itemize}
	\item Base case: $S_{l_0} \leq T$.
	\item Induction step: If $S_l \leq \Delta (T, d)$ for any $d \geq 0$, then $S_{l+1} \leq \Delta (T, d+\delta )$.
\end{itemize}
By induction, we get $S_l \leq \Delta (T, \delta \cdot (l-l_0))$ for all $l \geq l_0$.
For a suitable set $T$, this implies the complete ternary tree $T_l$
can be embedded in a grid with area $O(e^{2 \delta l}) = O(n^{2 \delta /\log 3})$
where $n$ is the number of nodes in the tree $T_l$.

For the purpose of demonstration, we will use $T = S_{18}$. As such, the base case is trivially satisfied for $l_0 = 18$.
The constant $\delta $ used is $\log(63761/35808)$,
which is in fact the smallest $\delta $ such that $S_{19} \leq \Delta (S_{18}, \delta )$.
This value was computed by using the algorithm described in \cite[Lemma 5]{Covella_2021}
to compute both $S_{18}$ and $S_{19}$ explicitly.
With this value of $\delta $, $2\delta /\log 3 < 1.051$.

Now we prove that the induction step holds.
\begin{lemma}
	\label{lemma_advance_upper_bound}
	Let $d\geq 0$. If $N(T) \leq U$, then $N(\Delta (T, d)) \allowbreak \leq \Delta (U, d)$.
\end{lemma}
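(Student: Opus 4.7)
The plan is to unfold every definition and reduce the statement to a coordinate-wise inequality that depends only on the sign of $d$. Let $\lambda = e^d \geq 1$ so that $\Delta (A, d) = \lambda A = \{(\lambda w, \lambda h) \mid (w, h) \in A\}$. The desired conclusion $N(\lambda T) \leq \lambda U$ unfolds, via the reversed inequality on upper-closed sets, to the inclusion $\lambda U \subseteq N(\lambda T)$. So I would start by picking an arbitrary $(w, h) \in \lambda U$ and produce a generator of $N(\lambda T)$ coordinate-wise dominated by $(w, h)$.

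Since $(w/\lambda , h/\lambda ) \in U \subseteq N(T)$, the definition of $N$ furnishes a pair $(w_l, h_l), (w_b, h_b) \in T$ such that $(w/\lambda , h/\lambda )$ dominates either the construction-1 point $(2 h_l+w_b,\ \frac{w_l}{2}+\max (\frac{w_l}{2}, h_b+\frac{1}{2}))$ or the construction-2 point $(\max (2 h_l+1, w_b),\ w_l + h_b)$. I then scale this pair through by $\lambda $: the candidates in $\lambda T$ are $(\lambda w_l, \lambda h_l)$ and $(\lambda w_b, \lambda h_b)$, and I would compare the resulting generators of $N(\lambda T)$ against $\lambda $ times the original generators.

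The crux is that the additive constants $\frac{1}{2}$ and $1$ appearing inside the max operators do not scale. Concretely, for construction 1, the generator in $N(\lambda T)$ is $(2 \lambda h_l + \lambda w_b,\ \lambda \frac{w_l}{2} + \max (\lambda \frac{w_l}{2}, \lambda h_b + \frac{1}{2}))$, while $\lambda $ times the original generator has $\lambda h_b + \frac{\lambda }{2}$ in place of $\lambda h_b + \frac{1}{2}$; because $\lambda \geq 1$, the latter is at least the former in every coordinate. The analogous computation for construction 2 replaces $2 \lambda h_l + 1$ with $2 \lambda h_l + \lambda $ inside the outer $\max $, again favorable because $\lambda \geq 1$. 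Either way, $(w, h)$ dominates a generator of $N(\lambda T)$ and so lies in $N(\lambda T)$.

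I do not anticipate any real obstacle: the entire argument is a one-line observation that the recurrence defining $N$ is subhomogeneous for $\lambda \geq 1$, i.e.\ the only places homogeneity could fail are the constants $\frac{1}{2}$ and $1$, and the hypothesis $d \geq 0$ is exactly the condition needed to make those terms harmless. The only care required is to keep straight the reversed $\leq $ convention on upper-closed sets, so that $N(T) \leq U$ is correctly unpacked as $U \subseteq N(T)$ before scaling, and the conclusion is repackaged as $\lambda U \subseteq N(\lambda T)$. Once that bookkeeping is fixed, the proof is a couple of lines of direct verification, and it does not use any structural property of $T$ or $U$ beyond upper-closedness.
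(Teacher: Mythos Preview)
Your proposal is correct and is essentially the same argument as the paper's: both unfold the definitions and reduce to the coordinate-wise subhomogeneity of the two construction maps, observing that the only obstruction to exact homogeneity is the additive constants $\tfrac{1}{2}$ and $1$, which are harmless because $e^d\geq 1$. Your element-chasing via $\lambda U\subseteq N(\lambda T)$ is just a slightly more explicit packaging of the same one-line computation the paper sketches.
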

\begin{proof}
	Expanding out the definitions, it suffices to prove the following. For both
	\[ f(w_l, h_l, w_b, h_b) = {\textstyle(2 h_l+w_b, \frac{w_l}{2}+\max(\frac{w_l}{2}, h_b+\frac{1}{2}))} \]
	and
	\[ f(w_l, h_l, w_b, h_b) = (\max(2 h_l+1, w_b), w_l + h_b), \]
	for all $d\geq 0$, then
	for all positive real $w_l$, $h_l$, $w_b$, $h_b$,
	set \[ (w, h) = f(w_l, h_l, w_b, h_b), \]
	then
	\[ f(w_l \cdot e^d, h_l \cdot e^d, w_b \cdot e^d, h_b \cdot e^d) \leq (w \cdot e^d, h \cdot e^d). \]
	So for example, in the second case of $f$, for the first component, we need to prove
	\[ 
		\max(2 h_l \cdot e^d+1, w_b \cdot e^d) \leq \max(2 h_l+1, w_b) \cdot e^d.
	\]
	Since $d\geq 0$, $e^d\geq 1$ and $(2 h_l+1) \cdot e^d \geq 2 h_l \cdot e^d +1$.
	Other cases are omitted because they are similar.
\end{proof}
Using this, the induction step can be proven.
Since $N(T) \leq \Delta (T, \delta )$, we get
\[ N(\Delta (T, d)) \leq \Delta (\Delta (T, \delta ), d) = \Delta (T, \delta +d). \]
The induction hypothesis gives us
$S_l \leq \Delta (T, d)$, so \[ N(S_l) = S_{l+1} \leq N(\Delta (T, d)). \]
Combining the two inequalities, we get $S_{l+1} \leq \Delta (T, \delta +d)$ as desired.

To complete the proof, we just need the following.
\begin{lemma}
	Let $T \subseteq \mathbb {R}^2$ be any non-empty upper-closed set, and $l_0$ be a fixed integer.
	If $S_l \leq \Delta (T, \delta \cdot (l-l_0))$ for all positive integers $l \geq l_0$,
	then the area of the smallest grid that $T_l$ can be embedded in is $O(n^{2\delta /\log 3})$,
	where $n \in \Theta (3^l)$ is the number of nodes in $T_l$.
\end{lemma}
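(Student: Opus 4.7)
The proof is essentially a routine unpacking of the definitions, so my plan is to carry out exactly that, with some care at the boundary between real-valued sets and integer-valued grid sizes.

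First, since $T$ is nonempty, I would fix some pair $(w_0, h_0) \in T$. For any $l \geq l_0$, applying the definition of $\Delta$ gives $(w_0 \cdot e^{\delta(l-l_0)},\, h_0 \cdot e^{\delta(l-l_0)}) \in \Delta(T, \delta(l-l_0))$. The hypothesis $S_l \leq \Delta(T,\delta(l-l_0))$ unfolds to $S_l \supseteq \Delta(T,\delta(l-l_0))$, so this same pair lies in $S_l = C(E_l)$. By the definition of upper-closure, there exist integers $w', h'$ with $(w',h') \in E_l$ and $w' \leq w_0 \cdot e^{\delta(l-l_0)}$, $h' \leq h_0 \cdot e^{\delta(l-l_0)}$. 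By the definition of $E_l$, this is precisely a straight-line orthogonal drawing of $T_l$ in a grid of width $w'$ and height $h'$, hence of area at most $w_0 h_0 \cdot e^{2\delta(l-l_0)}$.

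Next I would convert from $l$ to $n$. Since $n = (3^l-1)/2$ lies in $\Theta(3^l)$, we have $l = \log_3(2n+1)$, so
\[
	e^{2\delta l} \;=\; (e^{\log 3})^{2\delta l / \log 3} \;=\; (3^l)^{2\delta/\log 3} \;=\; O\bigl(n^{2\delta/\log 3}\bigr).
\]
Folding the constant factor $w_0 h_0 e^{-2\delta l_0}$ into the $O(\cdot)$ notation gives the claimed bound for all $l \geq l_0$. For the finitely many remaining values $l < l_0$, the number of nodes is bounded by a constant and so is the area of any fixed embedding, so these cases are absorbed into the big-$O$ constant.

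I do not expect any serious obstacle here: the only thing to be careful about is that $S_l$ consists of real pairs while $E_l$ consists of integer pairs with $w$ odd, and the translation from a real pair in $S_l$ to an actual embedding on an integer grid is exactly what the definition of $C(\cdot)$ provides. The area bound then follows directly from multiplying the two coordinates, and the conversion $l \mapsto n$ is a one-line computation.
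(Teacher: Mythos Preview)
Your proposal is correct and follows essentially the same approach as the paper's proof: fix a point of $T$, shift it into $S_l$, and read off the area bound. Your version is in fact more careful than the paper's, since you explicitly pass from $S_l = C(E_l)$ back to an integer pair in $E_l$, spell out the $l\mapsto n$ conversion, and handle the finitely many $l<l_0$.
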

\begin{proof}
	Pick an arbitrary fixed element $(w, h) \in T$.
	By definition, $S_l \leq \Delta (T, \delta \cdot (l-l_0))$, so $(w \cdot e^{\delta \cdot (l-l_0)}, h \cdot e^{\delta \cdot (l-l_0)}) \in S_l$.
	As such, there is a grid with area no more than $w \cdot e^{\delta \cdot (l-l_0)} \cdot h \cdot e^{\delta \cdot (l-l_0)}$
	that $T_l$ can be embedded in, this value is $O(n^{2\delta /\log 3})$ as needed.
\end{proof}

\section{Lower Bound}
\label{sec_lower_bound}

Similarly, we will fix a set $S$, a positive real $\delta $, and a positive integer $l_0$, and prove:
\begin{itemize}
	\item Base case: $S_{l_0} \geq S$;
	\item Induction step: For any $l$ and $d$, if $S_l \geq \Delta (S, d)$,
		then $S_{l+1} \geq \Delta (S, d+\delta )$.
\end{itemize}
As such, for every $l \geq l_0$, we get $S_l \geq \Delta (S, (l-l_0) \cdot \delta )$.
For a suitable initial set $S$, this implies the smallest area of a grid that $T_l$ can be embedded in is
$\Omega (\exp(2\delta \cdot l))$.

This time however, the analog of \cref{lemma_advance_upper_bound} would be the following
(we don't need to use this in the article, as such it is not proved):
\begin{quote}
	Let $d \leq 0$. If $N(S) \geq U$, then $N(\Delta (S, d)) \geq \Delta (U, d)$.
\end{quote}
Note that $d \leq 0$. This means inequalities can only be shifted ``backward'', not forward.
As such, we would need to conceptually define a set $S$ ``at infinity'',
then shift it backwards.
To formalize it, we give the following definition.
\begin{definition}
	\label{def_advance_at_inf}
	For any upper-closed set $A \subseteq \mathbb {R}^2$, define the \emph{advance at infinity} of $A$ to be
	\begin{multline*}
		N^\infty (A) = C\Bigl(\bigl\{ 
				{\textstyle(2 h_l+w_b, \frac{w_l}{2}+\max(\frac{w_l}{2}, h_b))}
				\\ \quad \bigm| (w_l, h_l) \in A, (w_b, h_b) \in A
		\bigr\} \\
		\cup \bigl\{
			(\max(2 h_l, w_b), w_l + h_b)
			\\ \bigm| (w_l, h_l) \in A, (w_b, h_b) \in A
		\bigr\}\Bigr).
	\end{multline*}
\end{definition}
This should be thought of as $\lim_{d \rightarrow +\infty } \allowbreak \Delta (N(\Delta (A, d)),-d)$ ---
shift $A$ to ``infinity'', advance it, then shift it back.

Also note that $N^\infty $ is invariant under $\Delta $-shifting ---
formally, for any real $d$, $N^\infty (\Delta (S, d)) = \Delta (N^\infty (S), d)$.

We get the following:
\begin{lemma}
	\label{lemma_advance_lower_bound}
	$N(S) \geq N^\infty (S)$.
	Therefore, if $N^\infty (S) \geq U$, then $N(S) \geq U$.
\end{lemma}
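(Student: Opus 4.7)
The plan is to reduce the inequality $N(S) \geq N^\infty(S)$ — which, unpacking the $\geq$ convention for upper-closed sets, means $N(S) \subseteq N^\infty(S)$ — to a coordinate-wise domination between the defining formulas. Both $N(S)$ and $N^\infty(S)$ are the upper closure of a union of two families of generator points, indexed by the same pairs $(w_l, h_l), (w_b, h_b) \in S$, so it suffices to show that each $N$-generator lies in $N^\infty(S)$ for every fixed choice of $(w_l, h_l)$ and $(w_b, h_b)$.

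For the construction~1 family, the $N$-generator is $(2h_l + w_b,\ \frac{w_l}{2} + \max(\frac{w_l}{2}, h_b + \frac{1}{2}))$ while the $N^\infty$-generator is $(2h_l + w_b,\ \frac{w_l}{2} + \max(\frac{w_l}{2}, h_b))$; the first coordinates coincide and, since $h_b + \frac{1}{2} \geq h_b$, the $N$-second-coordinate dominates. For the construction~2 family, the $N$-generator is $(\max(2h_l+1, w_b),\ w_l + h_b)$ and the $N^\infty$-generator is $(\max(2h_l, w_b),\ w_l + h_b)$; now the second coordinates agree and $\max(2h_l+1, w_b) \geq \max(2h_l, w_b)$ gives the domination on the first. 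In each case the $N$-generator coordinate-wise dominates a point of the $N^\infty$-generator set, hence lies in its upper closure $N^\infty(S)$. Taking upper closures on the left then yields $N(S) \subseteq N^\infty(S)$, which is exactly $N(S) \geq N^\infty(S)$.

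The second sentence of the lemma is then purely formal: if in addition $N^\infty(S) \geq U$, i.e.\ $N^\infty(S) \subseteq U$, we chain $N(S) \subseteq N^\infty(S) \subseteq U$ to conclude $N(S) \geq U$. I do not anticipate any real obstacle; the content of the lemma is simply that the additive ``$+\tfrac{1}{2}$'' and ``$+1$'' slack in the recursion of $N$ always enlarges the resulting rectangle, so deleting this slack (as $N^\infty$ does) can only weaken the bound on feasible grid sizes. The only subtlety is bookkeeping the reversed ordering on upper-closed sets, under which set inclusion is read as $\geq$.
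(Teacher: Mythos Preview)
Your argument is correct. The paper does not spell out a proof of this lemma at all --- it states the inequality as immediate after defining $N^\infty$ --- and your coordinate-wise comparison of the two generator families is exactly the obvious verification one supplies: the ``$+\tfrac12$'' and ``$+1$'' terms in $N$ only increase the respective coordinates, so each $N$-generator dominates the corresponding $N^\infty$-generator and hence lies in the upper-closed set $N^\infty(S)$.
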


Using reasoning similar to the previous section, the induction proceeds as follows.
Assume the set $S$ satisfies $N^\infty (S) \geq \Delta (S, \delta )$.
Then assume the base case $S_1 \geq S$ holds, the induction step can be proven as follows.
Using \cref{lemma_advance_lower_bound},
\begin{multline*}
N(\Delta (S, d)) \geq N^\infty (\Delta (S, d)) \\ = \Delta (N^\infty (S), d) \geq \Delta (S, d+\delta ).
\end{multline*}
From the induction hypothesis, $S_l \geq \Delta (S, d)$, so
\[ 
S_{l+1} = N(S_l) \geq N(\Delta (S, d)).
\]
Combining the two inequalities, we get $S_{l+1} \geq \Delta (S, d+\delta )$ as needed.

Now, the only remaining challenge is to construct such a set $S$.

\begin{definition}
	Fix constants $\sigma >1$ and $\varepsilon \in \mathbb {R}$.
	Define $S \subseteq \mathbb {R}^2$ to be the upper-closure of the set of points
	$\{ \exp (\omega , \max(\frac{-\omega }{\sigma }, -\omega \cdot \sigma )+\varepsilon ) \mid \omega \in \mathbb {R} \}$.
	Here, we write $\exp (\omega , \eta )$ to denote the pair $(\exp \omega ,\exp \eta )$.
\end{definition}

\begin{lemma}
	\label{lemma_lower_bound}
	For constants $\delta =0.5667$, $\varepsilon =0.10995$ and $\sigma =2.01979$,
	we have $N^\infty (S) \geq \Delta (S, \delta )$.
\end{lemma}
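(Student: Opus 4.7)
The plan is to pass to log-coordinates, reduce the claim to a short case analysis on the two branches of the V-shaped boundary of $S$, and verify each resulting convex inequality.

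\textbf{Setup.} In log-coordinates $(\omega, \eta) = (\log w, \log h)$, define $g(\omega) = \max(-\omega/\sigma, -\omega\sigma) + \varepsilon$. Then $S$ is the epigraph $\{(\omega,\eta) : \eta \geq g(\omega)\}$ --- a V with vertex $(0,\varepsilon)$ and slopes $-\sigma$ (left), $-1/\sigma$ (right). The shifted set $\Delta(S,\delta)$ is this V translated diagonally by $(\delta,\delta)$, so membership in $\Delta(S,\delta)$ is the conjunction of two linear inequalities
\[
\sigma\eta + \omega \;\geq\; \delta(\sigma+1) + \sigma\varepsilon, \qquad
\eta + \sigma\omega \;\geq\; \delta(\sigma+1) + \varepsilon.
\]
By upper-closure of $\Delta(S,\delta)$ and monotonicity of both $N^\infty$-generators in each of $w_l,h_l,w_b,h_b$, it suffices to restrict to $(w_l,h_l)$ and $(w_b,h_b)$ on the boundary of $S$, parameterized by free $\omega_l,\omega_b \in \mathbb{R}$ with $\eta_l = g(\omega_l)$ and $\eta_b = g(\omega_b)$.

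\textbf{Case decomposition.} For the type-1 generator $(2h_l + w_b,\; w_l/2 + \max(w_l/2, h_b))$ the log of the first coordinate is a log-sum-exp $\operatorname{LSE}(\log 2 + \eta_l, \omega_b)$, and the log of the second is either $\omega_l$ or $\operatorname{LSE}(\omega_l - \log 2, \eta_b)$ depending on the sign of $\omega_l - \log 2 - \eta_b$. For the type-2 generator $(\max(2h_l, w_b),\; w_l + h_b)$ the log of the first is $\max(\log 2 + \eta_l, \omega_b)$ and the log of the second is $\operatorname{LSE}(\omega_l, \eta_b)$. Combined with the two branches of $g$ at each of $\omega_l$ and $\omega_b$, this produces on the order of $16$ sub-regions of the $(\omega_l,\omega_b)$-plane, and in each the two required inequalities become convex inequalities in $(\omega_l,\omega_b)$ with closed-form structure.

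\textbf{Verification and main obstacle.} In each sub-region the inequality has the shape ``a positive linear combination of $\operatorname{LSE}$'s and linear forms in $(\omega_l,\omega_b) \geq $ const.'' Such a function is convex, so its minimum on each closed sub-region is attained at an interior critical point (where the softmax weights $e^a/(e^a+e^b)$ satisfy a small system that solves in closed form), at $\pm\infty$ along some ray (where one exponential dominates and reduces the bound to a linear inequality in $\omega_l,\omega_b$), or on a boundary between sub-regions (handled by continuity with the neighboring case). Substituting the specific constants $\delta=0.5667$, $\varepsilon=0.10995$, $\sigma=2.01979$ reduces each sub-region to a concrete numerical inequality. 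The real difficulty is bookkeeping: because the constants are chosen essentially tight, the extremal point in at least one sub-case will sit on or very near a corner where two sub-regions meet, so the decomposition must be organized so that no slack is wasted and every case transition is rechecked. Once each of the finitely many minima is verified to be nonnegative at these constants, the lemma follows.
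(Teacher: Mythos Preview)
Your outline is sound: passing to log-coordinates, restricting to the boundary of $S$ by monotonicity of the generators, and splitting over the branches of $g(\omega_l)$, $g(\omega_b)$, the two constructions, the internal $\max$, and the two target half-planes does give finitely many convex minimization problems whose nonnegativity would establish the lemma. Nothing in the plan is wrong. But as written it is a plan, not a proof --- you have not computed a single minimum, and with the constants chosen essentially tight, the verification you defer to ``bookkeeping'' is where all the content lies.

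The paper avoids your two-variable case analysis entirely by a reduction you are missing. Rather than projecting $(w_l,h_l)$ and $(w_b,h_b)$ onto the V and then branching on which arm each lies on, it keeps them generic and uses the membership constraint in both directions: from $\eta_i \geq \varepsilon - \omega_i\tau$ one also has $\omega_i \geq (\varepsilon-\eta_i)\tau$, valid simultaneously for both $\tau\in\{\sigma,1/\sigma\}$. For a fixed target slope $\tau$, the paper substitutes the $\tau$-version of these bounds for $\omega_l$ and $\eta_b$ in the left-hand side; after using the shift identity $\sm(a+d,b+d)=\sm(a,b)+d$, the variables $\omega_b$ and $\eta_l$ enter only through the single combination $\varphi=\eta_l-\omega_b$. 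This collapses the whole problem to four one-variable functions $f_{0,\tau}$ (construction~1) and $\max(f_{1,\tau},f_{2,\tau})$ (construction~2) with $\tau\in\{\sigma,1/\sigma\}$, each of whose global minimum has a closed-form location; plugging in the numerical constants then finishes in one line per case. Your brute-force decomposition would eventually reach the same conclusion, but the paper's trick replaces roughly sixteen two-variable sub-regions by four explicit one-variable checks.
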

The proof will be deferred for later.
Using the lemma, we get the following:
\begin{theorem}
	Set $\delta $ as above.
	If $T_l \leq (w, h)$, then $w \cdot h \in \Omega (n^{2\delta /\log 3}) \geq \Omega (n^{1.031})$,
	where $n \in \Theta (3^l)$ is the number of nodes in $T_l$.
\end{theorem}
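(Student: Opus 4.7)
The plan is to carry out the inductive scheme already set up above, using \cref{lemma_lower_bound} to drive the induction and the geometry of $S$ to translate the resulting inclusion into an area bound. Concretely, I aim to establish that for every $l \geq 1$,
\[
S_l \geq \Delta(S, c_0 + (l-1)\delta)
\]
for some real constant $c_0$, and then extract the claimed lower bound from the minimum of $w \cdot h$ on $S$.

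For the base case I would take $l_0 = 1$ and pick $c_0$ small enough that $(1, 1) \in \Delta(S, c_0)$; since $S_1$ is the upper closure of $(1, 1)$ and $\Delta(S, c_0)$ is upper-closed, this is equivalent to $S_1 \geq \Delta(S, c_0)$. The condition on $c_0$ is a one-line unpacking of the definition of $S$ at the point $\omega = 0$ on the generating curve, and any sufficiently negative $c_0$ works; only the slope $\delta$ per level will affect the asymptotics. The induction step is then a direct assembly of previous results: \cref{lemma_lower_bound} gives $N^\infty(S) \geq \Delta(S, \delta)$; the $\Delta$-invariance of $N^\infty$ promotes this to $N^\infty(\Delta(S, d)) \geq \Delta(S, d+\delta)$; \cref{lemma_advance_lower_bound} pushes it through $N$ to $N(\Delta(S, d)) \geq \Delta(S, d+\delta)$; and the obvious monotonicity of $N$ (if $A \subseteq B$ then $N(A) \subseteq N(B)$, i.e.\ $A \geq B \Rightarrow N(A) \geq N(B)$) combined with $S_{l+1} = N(S_l)$ from \cref{lemma_N_advance_S} closes the induction.

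For the conclusion, given any $(w, h)$ with $T_l \leq (w, h)$ one has $(w, h) \in S_l \subseteq \Delta(S, c_0 + (l-1)\delta)$. I would then show $\min_{(a,b) \in S} a \cdot b = e^\varepsilon$: for any $(a, b) \in S$ dominating a curve-point at parameter $\omega$, $\log(ab) \geq \omega + \max(-\omega/\sigma, -\omega\sigma) + \varepsilon$, and a short case-check on the sign of $\omega$ using $\sigma > 1$ shows the first two terms together are always non-negative (with equality at $\omega = 0$). Scaling this minimum through $\Delta$ yields $wh \geq e^{\varepsilon} \cdot e^{2(c_0 + (l-1)\delta)} = \Omega(e^{2\delta l})$, and together with $n \in \Theta(3^l)$ this gives $\Omega(n^{2\delta / \log 3})$; for $\delta = 0.5667$ a direct calculation yields $2\delta/\log 3 > 1.031$.

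The main obstacle is not in the theorem itself but upstream, in \cref{lemma_lower_bound}: verifying $N^\infty(S) \geq \Delta(S, \delta)$ amounts to showing that both branches of the $N^\infty$ formula carry every pair of points $(w_l, h_l), (w_b, h_b) \in S$ into $\Delta(S, \delta)$. This is a two-parameter minimization in the curve parameters $(\omega_l, \omega_b)$ describing representatives of $(w_l, h_l)$ and $(w_b, h_b)$, and it is delicate because the piecewise-linear exponents in the boundary of $S$ together with the $\max$-expressions inside $N^\infty$ partition the $(\omega_l, \omega_b)$-plane into several regions, each of which must be bounded separately; the fact that the stated constants $(\delta, \varepsilon, \sigma)$ are specific tuned numerical values suggests that this is genuine numerical optimization rather than a clean closed-form argument.
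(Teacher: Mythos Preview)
Your proposal is correct and follows essentially the same inductive scheme as the paper; the only cosmetic difference is that the paper anchors the base case at $l_0=2$ via the direct inclusion $S_2 \geq S$ rather than introducing a shift $c_0$ at $l_0=1$, which changes nothing asymptotically. One small correction to your closing remarks: the verification of \cref{lemma_lower_bound} is not a genuine two-parameter numerical minimization---the paper collapses it to a single variable $\varphi = \eta_l - \omega_b$ using the shift-equivariance of the softmax, then locates the minima of the resulting one-variable functions analytically.
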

Note that by our definition of $\leq $, the theorem only lower bounds the area of embeddings
satisfying the subtree separation property.

\begin{proof}
	Note that for $l_0 = 2$ then $S_2 \geq S$.

	Apply induction by the plan described above,
	we get $S_l \geq \Delta (S, (l-l_0)\cdot \delta )$ for all $l \geq 2$.

	Note that for every $(w, h) \in S$ then $w \cdot h \geq \exp \varepsilon $,
	therefore for every $(w, h) \in \Delta (S, (l-l_0)\cdot \delta )$
	then $w \cdot h \geq \exp (2 (l-l_0) \delta + \varepsilon )$,
	so we are done.
\end{proof}

Now we prove \cref{lemma_lower_bound}.
\begin{proof}
	Pick $(w, h) \in N^\infty (S)$.
	Define $w_l$, $h_l$, $w_b$, $h_b$ as in \cref{def_advance_at_inf},
	then $(w_l, h_l) \in S$ and $(w_b, h_b) \in S$.

	Expanding out these conditions, we get that the assumptions are, for both $\tau \in \{ \sigma , \frac{1}{\sigma }\}$:
	\begin{align*}
		\log h_l &\geq \varepsilon -\log w_l \cdot \tau , \\
		\log h_b &\geq \varepsilon -\log w_b \cdot \tau .
	\end{align*}
	We need to prove $(w, h) \in \Delta (S, \delta )$.
	This is equivalent to the following statement: for both $\tau \in \{ \sigma , \frac{1}{\sigma }\}$:
	\begin{align*}
		\log h &\geq \delta +\varepsilon +(\delta -\log w) \cdot \tau .
	\end{align*}

	In the first case (construction 1), it suffices for us to prove for both $\tau \in \{ \sigma , \frac{1}{\sigma }\}$:
	\begin{align*}
		\log {\textstyle(\frac{w_l}{2}+h_b)} &\geq \delta +\varepsilon +(\delta -\log (2 h_l+w_b)) \cdot \tau .
	\end{align*}
	For any positive reals $a$ and $b$, $\log(a+b) \geq \max(\log a, \log b)$.
	Define the softmax function $\sm(a, b) = \log(\exp a+\exp b)$, then the left-hand side is $> \log(w_l+h_b) = \sm(\log w_l, \log h_b)$.

	Define $\omega _i = \log w_i$, $\eta _i = \log h_i$ for $i \in \{ l, b \}$.
	Then we need to prove
	\[
		\sm(\omega _l-\log 2, \eta _b) \geq \delta +\varepsilon +(\delta -\sm(\eta _l+\log 2, \omega _b)) \cdot \tau .
	\]
	
	With this new notation, for all $\tau \in \{ \sigma , \frac{1}{\sigma }\}$ and $i \in \{ l, b \}$ then
	\[ \eta _i \geq \varepsilon -\omega _i \cdot \tau .  \]
	Equivalently,
	\[ \omega _i \geq (\varepsilon - \eta _i) \cdot \tau .  \]

	Thus, we just need to prove
	\begin{multline*}
		\sm((\varepsilon -\eta _l)\cdot \tau -\log 2, \varepsilon -\omega _b\cdot \tau ) \\ \geq \delta +\varepsilon +(\delta -\sm(\eta _l+\log 2, \omega _b))\tau .
	\end{multline*}
	Set $\varphi = \eta _l-\omega _b$, 
	since the $\sm$ function satisfies $\sm(a+d, b+d) = \sm(a, b)+d$ for all $a$, $b$, $d \in \mathbb {R}$,
	this simplifies to
	\begin{multline*}
		\sm((\varepsilon -\varphi )\tau -\log 2, \varepsilon ) \\ \geq (1+\tau )\delta +\varepsilon -\sm(\varphi +\log 2, 0)\tau .
	\end{multline*}
	Equivalently, since $1+\tau >0$,
	\[
		\delta \leq \frac{\sm((\varepsilon -\varphi )\tau -\log 2, \varepsilon )-\varepsilon +\sm(\varphi +\log 2, 0)\tau }{1+\tau }.
	\]
	The right hand side only contains one variable $\varphi $.
	We define the function $f_{0, \tau } (\varphi )$ to be the right hand side,
	and we wish to compute the minimum of $f_{0, \tau }$.
	Notice that $f'_{0, \tau } (\varphi ) = 0$ has an unique solution
	\[ \varphi = \frac{(\tau -1)\varepsilon - \log 4}{1 + \tau } \]
	and this can be shown to be the global minimum of $f_{0, \tau }$.
	At this point, the value of $f_{0, \tau }$ is $\geq \delta $ for both $\tau = \sigma $ and $\tau = \frac{1}{\sigma }$,
	so we're done.

	In the second case (construction 2), we need to prove:
	\[
		\log (w_l+h_b) \geq \delta +\varepsilon +(\delta -\log(\max(w_b, 2 h_l))) \cdot \tau .
	\]
	This simplifies to
	\[
		\sm(\omega _l, \eta _b) \geq \delta \cdot (1+\tau )+\varepsilon -\max(\omega _b, \eta _l+\log 2) \cdot \tau .
	\]
	Doing exactly as above, we just need to prove
	\begin{multline*}
		\sm((\tau -1) \varepsilon -\eta _l \tau , -\omega _b \tau ) \\ \geq \delta \cdot (1+\tau )-\max(\omega _b, \eta _l+\log 2) \cdot \tau .
	\end{multline*}
	Set $\varphi = \eta _l-\omega _b$ again, this further simplifies to
	\[
		\delta \leq \frac{\sm(0, (\tau -1) \varepsilon -\varphi \tau )+\max(0, \varphi +\log 2) \cdot \tau }{1+\tau }.
	\]
	Set $f_{1, \tau }(\varphi ) = \frac{\sm(0, (\tau -1) \varepsilon -\varphi \tau )}{1+\tau }$
	and $f_{2, \tau }(\varphi ) = \frac{\sm(0, (\tau -1) \varepsilon -\varphi \tau )+(\varphi +\log 2) \cdot \tau }{1+\tau }$,
	notice that $f_{1, \tau }$ is decreasing, $f_{2, \tau }$ is increasing,
	and the equation $f_{1, \tau }(\varphi ) = f_{2, \tau }(\varphi )$ has a unique solution $\varphi = -\log 2$,
	at this point
	\[ f_{1, \tau }(\varphi ) = f_{2, \tau }(\varphi ) = \frac{\sm(0, (\tau -1) \varepsilon +\tau \log 2)}{1+\tau }. \]
	As such, $\max(f_{1, \tau }(\varphi ), f_{2, \tau }(\varphi ))$ has a global minimum at this point.
	For both $\tau = \sigma $ and $\tau = \frac{1}{\sigma }$, this value is $\geq \delta $.
\end{proof}

We are unable to obtain a closed-form formula for these constants; however, it can be efficiently
computed to arbitrary precision using software such as Mathematica ---
namely, find the value of $\sigma $, $\varepsilon $ and $\delta $
such that
\begin{multline}
	\label{equation_for_exact_constants}
\delta = f_{1, \sigma }(-\log 2)
= f_{1, 1/\sigma }(-\log 2)
\\ = f_{0, 1/\sigma }\Bigl(\frac{(1/\sigma -1) \varepsilon -\log 4}{1+1/\sigma }\Bigr),
\end{multline}
then it can be confirmed that
$f_{0, \sigma }\bigl(\frac{(\sigma -1) \varepsilon -\log 4}{1+\sigma }\bigr) > \delta $.

\section{Numerical Proof for Improved Upper Bound}
\label{sec_numerical_upper}

In \cref{sec_upper_bound}, we improved the upper bound on the minimum area required
for a straight-line orthogonal drawing of $T_l$. In order to do so,
we used a certain upper-closed set $T$ satisfying $N(T) \leq \Delta (T, \delta )$
for a constant $\delta < 0.577$. Specifically, we used $T = S_{18}$.

If we use a different set $T$ such that $N(T) \leq \Delta (T, \delta ')$ for a smaller constant $\delta '$,
we would be able to improve the upper bound accordingly.
In order to simplify the analysis, we use the following.
\begin{lemma}
	Fix an upper-closed set $T$ such that $C(\{ 1, 1 \}) \leq T$, and a constant $\delta >0$.
	If $N^\infty (T) \leq \Delta (T, \delta )$, then for all $\varepsilon >0$,
	we get $N(T') \leq \Delta (T', \delta +\varepsilon )$
	where $T' = \Delta (T, d)$ for some sufficiently large real constant $d$.
\end{lemma}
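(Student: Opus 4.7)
The key observation is that the two formulas defining $N$ differ from those of $N^\infty$ only by bounded additive constants: in construction~1 the first coordinates agree and the second coordinate of $N$ exceeds that of $N^\infty$ by at most $1/2$, while in construction~2 the second coordinates agree and the first coordinate of $N$ exceeds that of $N^\infty$ by at most $1$. Consequently, for any $(w_l, h_l), (w_b, h_b)$, the output of whichever $N$-formula we choose is componentwise at most the corresponding $N^\infty$-output plus $(1, 1/2)$. Shifting $T$ to $T' = \Delta(T, d)$ for large $d$ forces every coordinate of every point of $T'$ to be at least $e^d$, so this bounded additive error becomes negligible compared to a multiplicative shift by $e^{\varepsilon}$.

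To set up the argument, I would first combine the $\Delta$-equivariance of $N^\infty$ noted in the paper with the hypothesis $N^\infty(T) \leq \Delta(T, \delta)$ to get $N^\infty(T') = \Delta(N^\infty(T), d) \leq \Delta(T', \delta)$. Now, given any $(w, h) \in \Delta(T', \delta + \varepsilon)$, I would split the slack as $\varepsilon = \varepsilon_1 + \varepsilon_2$ with $\varepsilon_1, \varepsilon_2 > 0$. Then $(w e^{-\varepsilon_1}, h e^{-\varepsilon_1}) \in \Delta(T', \delta + \varepsilon_2) \subseteq \Delta(T', \delta) \subseteq N^\infty(T')$, so there exist $(w_l, h_l), (w_b, h_b) \in T'$ whose $N^\infty$-output is $\leq (w e^{-\varepsilon_1}, h e^{-\varepsilon_1})$. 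Feeding the same pair into the corresponding $N$-formula yields a point $\leq (w e^{-\varepsilon_1} + 1, h e^{-\varepsilon_1} + 1/2)$, which by definition lies in $N(T')$; it then remains only to check that this point is $\leq (w, h)$, since upper-closure of $N(T')$ would give $(w, h) \in N(T')$ as required.

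That final check reduces to $w(1 - e^{-\varepsilon_1}) \geq 1$ and $h(1 - e^{-\varepsilon_1}) \geq 1/2$. From $C(\{(1,1)\}) \leq T$ we know every point of $T'$ has coordinates at least $e^d$, and since $(w e^{-(\delta+\varepsilon)}, h e^{-(\delta+\varepsilon)}) \in T'$ we get $w, h \geq e^{d + \delta + \varepsilon}$; both required inequalities therefore hold once $d$ exceeds an explicit threshold depending only on $\delta$, $\varepsilon$, and $\varepsilon_1$. The main subtlety is organizational rather than technical: one must fix the split $\varepsilon = \varepsilon_1 + \varepsilon_2$ before choosing $d$, with $\varepsilon_2$ providing room to land strictly inside $N^\infty(T')$ and $\varepsilon_1$ serving as the multiplicative buffer that absorbs the bounded additive gap between $N$ and $N^\infty$.
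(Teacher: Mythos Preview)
Your argument is correct and follows essentially the same idea as the paper: both proofs exploit that the $N$-formulas exceed the $N^\infty$-formulas by at most a bounded additive constant, and use the $\varepsilon$ of multiplicative slack (which becomes arbitrarily large in absolute terms once $d$ is large, thanks to $C(\{(1,1)\}) \leq T$) to absorb that constant. The only cosmetic differences are that the paper puts the $e^{-\varepsilon}$ factor on the \emph{inputs} $(w_l,h_l),(w_b,h_b)$ rather than on the target $(w,h)$, and that your split $\varepsilon=\varepsilon_1+\varepsilon_2$ is unnecessary (taking $\varepsilon_1=\varepsilon$, $\varepsilon_2=0$ already works, since you only need to land in $N^\infty(T')$, not strictly inside it).
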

\begin{proof}
	Unrolling the definition, we need to prove that
	with notation as above,
	for every $(w, h) \in \Delta (T', \delta +\varepsilon )$,
	then $(w, h) \in N(T')$.

	The statement $(w, h) \in \Delta (T', \delta +\varepsilon )$
	is equivalent to $w = \exp(d+\delta +\varepsilon ) w'$, $h = \exp(d+\delta +\varepsilon ) h'$
	for $(w', h') \in T$.

	By assumption, $N^\infty (T) \leq \Delta (T, \delta )$,
	so $N^\infty (\Delta (T, d+\varepsilon )) \leq \Delta (T, d+\delta +\varepsilon )$,
	which means
	$(w, h) \in N^\infty (\Delta (T, d+\varepsilon ))$.
	Expanding out the definition of $N^\infty $,
	this means there exists $(w_l, h_l)$ and $(w_b, h_b) \in \Delta (T, d+\varepsilon )$
	such that either 
	\begin{equation}
		\label{equation_construction_1}
		w = 2h_l+w_b \text{ and } h = \frac{w_l}{2}+\max\Bigl(\frac{w_l}{2}, h_b\Bigr)
	\end{equation}
	or
	\begin{equation}
		\label{equation_construction_2}
		w=\max(2h_l, w_b) \text{ and } h=w_l+h_b.
	\end{equation}

	The statement we need to prove is $(w, h) \in N(T')$.
	With $(w_l, h_l)$ and $(w_b, h_b)$ as above,
	we get
	\begin{equation}
		\label{equation_given_info}
		(w_l/\exp \varepsilon , h_l/\exp \varepsilon ), (w_b/\exp \varepsilon , h_b/\exp \varepsilon ) \in T'.
	\end{equation}
	In the first case where \cref{equation_construction_1} holds, we will use \cref{equation_given_info} to get the following:
	\begin{multline*}
		\biggl(\frac{2h_l+w_b}{\exp \varepsilon },
		\frac{w_l}{2\exp \varepsilon }+
			\max\Bigl(\frac{w_l}{2\exp \varepsilon }, \frac{h_b}{\exp \varepsilon }+\frac{1}{2}\Bigr)\biggr)
			\\ \in N(T').
	\end{multline*}
	We will see that we can pick sufficiently large $d$ so that we can guarantee
	$w \geq \frac{2h_l+w_b}{\exp \varepsilon }$ and
	$h \geq \frac{w_l}{2\exp \varepsilon }+\max(\frac{w_l}{2\exp \varepsilon }, \frac{h_b}{\exp \varepsilon }+\frac{1}{2})$ simultaneously.
	Assume otherwise.
	Since $\varepsilon >0$,
	$w \geq \frac{2h_l+w_b}{\exp \varepsilon }$ always,
	so
	$\frac{w_l}{2}+\max(\frac{w_l}{2}, h_b) < \frac{w_l}{2\exp \varepsilon }+\max(\frac{w_l}{2\exp \varepsilon }, \frac{h_b}{\exp \varepsilon }+\frac{1}{2})$,
	so
	$\max(\frac{w_l}{2}, h_b) < \max(\frac{w_l}{2\exp \varepsilon }, \frac{h_b}{\exp \varepsilon }+\frac{1}{2})$.
	Therefore $\frac{w_l}{2} < \frac{h_b}{\exp \varepsilon }+\frac{1}{2}$, so
	$\max(\frac{w_l}{2\exp \varepsilon }, \frac{h_b}{\exp \varepsilon }+\frac{1}{2}) = \frac{h_b}{\exp \varepsilon }+\frac{1}{2}$,
	so $h_b < \frac{h_b}{\exp \varepsilon }+\frac{1}{2} \Longleftrightarrow h_b < \frac{1}{2 (1-\exp(-\varepsilon )))}$.
	Since $(w_b, h_b) \in \Delta (T, d+\varepsilon )$ and because $C(\{ 1, 1 \}) \leq T$,
	by picking $d$ large enough, we can make $h_b < \frac{1}{2 (1-\exp(-\varepsilon )))}$ impossible.
	Note that $d$ still only depends on $\varepsilon $ and not on $h_b$.

	In the other case, we use \cref{equation_given_info} to get
	\[ \biggl(\max\Bigl(\frac{2 h_l}{\exp \varepsilon }+1, \frac{w_b}{\exp \varepsilon }\Bigr), \frac{w_l+h_b}{\exp \varepsilon }\biggr) \in N(T'). \]
	Proceed similarly.
\end{proof}

We now explain how the set $T$ was found.
Consider the set $S$ in \cref{sec_lower_bound}.
With values of $\sigma $, $\varepsilon $ and $\delta $ satisfying \cref{equation_for_exact_constants} exactly,
we get $S \leq \Delta (N^\infty (S), -\delta )$.
Define the operator $P(S) = \Delta (N^\infty (S), -\delta )$
and let $P^k (S)$ be $P$ applied $k$ times on $S$,
then we get $S \leq P(S)$, which means $P^k (S) \leq P^{k+1}(S)$ for all $k$,
and we conjecture the following based on numerical evidence:
\begin{conj}
	$\lim_{k \rightarrow \infty } P^k (S) = \{ (w, h) \mid h \geq f(w) \}$
	for an analytic function $f$ such that $f(\exp \omega ) - (\varepsilon -\omega \cdot \sigma ) \rightarrow 0$ as $\omega \rightarrow -\infty $
	and $f(\exp \omega ) - (\varepsilon -\omega /\sigma ) \rightarrow 0$ as $\omega \rightarrow +\infty $.
	Furthermore, let $P^\infty (S)$ denote that limiting set,
	then $P(P^\infty (S)) = P^\infty (S)$.
\end{conj}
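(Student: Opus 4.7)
The plan is to realize $P^\infty (S)$ as the set-theoretic intersection $\bigcap_{k \geq 0} P^k(S)$ and then to establish, separately, the shape and asymptotics of its lower boundary and the fixed-point property. First, the operator $P$ is monotone with respect to set inclusion: if upper-closed sets satisfy $A \subseteq B$, then $P(A) \subseteq P(B)$, since $N^\infty $ draws its witnesses $(w_l, h_l)$, $(w_b, h_b)$ from the input and $\Delta $ preserves inclusions. Combined with the relation $S \supseteq P(S)$ supplied by \cref{lemma_lower_bound} (under the convention $A \leq B \Longleftrightarrow A \supseteq B$), this yields a nested decreasing family $S \supseteq P(S) \supseteq P^2(S) \supseteq \cdots $ whose intersection is upper-closed and is the candidate limit $P^\infty (S)$.

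Next, I would show that $P^\infty (S)$ is the epigraph of a continuous function $f$ whose asymptotes agree with those of the tent boundary of $S$. Each $P^k(S)$ has a continuous lower-boundary function $f_k$, built from finitely many affine and $\max $-operations in log-coordinates. By the very equations \cref{equation_for_exact_constants} defining $\sigma $, $\varepsilon $, $\delta $, the boundary of $P(S)$ coincides with that of $S$ on the two asymptotic rays (the equality cases of the optimization problems in the proof of \cref{lemma_lower_bound}) and is strictly inside $S$ only in the bounded ``corner'' region. Induction on $k$ gives $f_k = f_0$ on the asymptotes, and therefore $f := \sup_k f_k$ also agrees with $f_0$ on the asymptotes, providing the claimed asymptotic behavior. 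Continuity of $f$ follows from a uniform Lipschitz bound on $\{f_k\}$ in log-coordinates, extracted from the softmax expressions used in \cref{lemma_lower_bound}.

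For the fixed-point property $P(P^\infty (S)) = P^\infty (S)$, one inclusion is immediate: by monotonicity $P(P^\infty (S)) \subseteq P(P^k(S)) = P^{k+1}(S)$ for every $k$, so $P(P^\infty (S)) \subseteq P^\infty (S)$. For the reverse inclusion, fix $(w, h)$ on the boundary of $P^\infty (S)$; for each $k$ pick witnesses $(w_l^{(k)}, h_l^{(k)}), (w_b^{(k)}, h_b^{(k)}) \in P^k(S)$ realizing $(w, h)$ through one of the two formulas of \cref{def_advance_at_inf}. Since $(w, h)$ is fixed, the formulas bound the witnesses in a compact log-coordinate box (above by $(w, h)$ and below by the asymptotes of $S$, which all $P^k(S)$ still contain); a diagonal subsequence restricted to a fixed construction branch extracts limit witnesses, and closedness of each $P^k(S)$ places these limits in $\bigcap _k P^k(S) = P^\infty (S)$, giving $(w, h) \in P(P^\infty (S))$.

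The main obstacle I anticipate is analyticity of $f$. After the log-coordinate change used in the proof of \cref{lemma_lower_bound}, the fixed-point equation becomes an implicit functional equation $f(w) = \Phi (f(w_l), f(w_b), w_l, w_b)$ where $(w_l, w_b)$ are minimizers of $\Phi $ and $\Phi $ is built from softmax and analytic arithmetic. The analytic implicit function theorem then yields local analyticity of $f$ wherever the minimizer is unique and the associated Jacobian is non-degenerate. The delicate point is to rule out non-unique or branch-switching minimizers: a natural route is to exploit strict convexity of softmax in log-coordinates (already implicit in \cref{lemma_lower_bound} when identifying global minima) to force uniqueness, combined with a verification that the $\max $-branches inside $N^\infty $ are not active at an interior minimizer. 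These uniqueness and non-degeneracy inputs are where I expect the bulk of the technical work to lie.
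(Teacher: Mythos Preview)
The statement you are trying to prove is explicitly labeled a \emph{conjecture} in the paper, and the paper does not prove it; immediately after stating it the authors write that the corresponding functional equation is one they ``are unable to solve it or prove that the solution exists.'' So there is no proof in the paper to compare against, and you are attempting something the authors leave open.

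Parts of your outline are sound. Monotonicity of $P$ and the nested chain $S \supseteq P(S) \supseteq P^2(S) \supseteq \cdots$ are correct, and defining $P^\infty(S)$ as the intersection is the natural move. Your compactness argument for the fixed-point inclusion $P^\infty(S) \subseteq P(P^\infty(S))$ is also on the right track, though you should be more careful: the formulas in \cref{def_advance_at_inf} bound only some coordinates of the witnesses directly (e.g.\ in construction~2, $w_l \leq h$ and $2h_l \leq w$, but $w_b, h_b$ are only bounded indirectly through membership in $S$), so the compact box is not immediate.

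There are, however, two genuine gaps. First, your asymptote claim ``$f_k = f_0$ on the asymptotes'' does not follow from \cref{equation_for_exact_constants}. Those equalities say the minimum value of the one-variable functions $f_{0,\tau}, f_{1,\tau}$ is exactly $\delta$, which means the boundary of $P(S)$ \emph{touches} that of $S$ at isolated points, not that it coincides along entire rays. To get the asymptotic statement in the conjecture you would need to analyze the boundary of $N^\infty(S)$ as $\omega \to \pm\infty$ and show the deviation from the tent decays; this is a separate computation you have not done, and it must be redone uniformly in $k$.

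Second, and more seriously, your analyticity sketch is not a proof, as you yourself flag. The operator $N^\infty$ contains hard $\max$ terms, so the map $\Phi$ you write down is only piecewise analytic; invoking the analytic implicit function theorem requires first proving that along the boundary of the limit set the active branch never switches and the minimizer is unique and varies analytically. Strict convexity of softmax helps with uniqueness in the softmax pieces but says nothing about the $\max(\frac{w_l}{2}, h_b)$ and $\max(2h_l, w_b)$ branch points, and it is entirely plausible that the limit boundary crosses such a branch locus. This is precisely the obstruction the paper alludes to, and your proposal does not overcome it.
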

Such a function $f$ must satisfy a certain functional equation; however, we are unable to solve it
or prove that the solution exists.
If the conjecture holds, pick $T = \Delta (P^\infty (S), d) \cap C(\{ w, h \})$ for some large $d$ and suitable $w$ and $h$,
we get $N^\infty (T) \leq \Delta (T, \delta )$, which would have implied the area is $\in \widetilde{\Theta }(n^{2\delta /\log 3})$.
However, we are only able to find a numerical approximation of $T$ that gives the bound $O(n^{1.032})$.
\iftrue
The algorithm computes $P^k (S)$ for a sufficiently large number $k$, with some approximations to make the
time complexity manageable.
The code to compute the set $T$ that certifies this upper bound can be found in the arXiv source code.
\fi

\section{Conclusion}

In this article, we improve the upper bound on the minimum area required
for a straight-line orthogonal drawing of the complete ternary tree,
and prove an almost-matching lower bound in the special case of
drawings satisfying the subtree separation property.

There are still several interesting open questions that need further research,
namely whether there is a nontrivial lower bound when the drawing is not required
to satisfy the subtree separation property, and whether the constant
can be determined analytically to prove that our lower bound is tight.

{
\hfuzz=10pt
\bibliographystyle{abbrv}
\bibliography{ternary-tree-on-grid}

}

\end{document}